\newtheorem{theorem}{Theorem}
\newtheorem{lemma}[theorem]{Lemma}
\newtheorem{definition}[theorem]{Definition}
\newcommand\Icpx{\mathbb I}
\newcommand\Ucpx{\mathbb U}
\newcommand\Xcpx{\mathbb X}
\newcommand\kk{\mathbb k}
\newcommand\Ucoll{\mathcal U}
\newcommand\img{\mathop{\mathrm{im}}}
\newcommand\coker{\mathop{\mathrm{coker}}}
\newcommand\Tot{\mathop{\mathrm{Tot}}}
\begin{document}

\title{A spectral sequence for parallelized persistence}


\author{David Lipsky}
\address{Department of Mathematics, University of Pennsylvania, 209 South 33rd
  Street. Philadelphia, PA 19104,  USA}
\email{\url{dlipsky@math.upenn.edu}}

\author{Primoz Skraba}
\address{Artificial Intelligence Laboratory, Jo\v zef Stefan Institute, 
Jamova 39, 1000 Ljubljana, Slovenia}
\email{\url{primoz.skraba@ijs.si}}

\author{Mikael Vejdemo-Johansson}
\address{Corresponding author\\School of Computer Science, University of St Andrews, Jack
  Cole Building \\ North Haugh, St Andrews KY16 9SX, Scotland, UK}
\email{\url{mvvj@st-andrews.ac.uk}}


\begin{abstract}
We approach the problem of the computation of persistent homology
for large datasets by a divide-and-conquer strategy. Dividing the
total space into separate but overlapping components, we are able
to limit the total memory residency for any part of the
computation, while not degrading the overall complexity
much. Locally computed persistence information is then merged
from the components and their intersections using a spectral
sequence generalizing the Mayer-Vietoris long exact sequence.

We describe the Mayer-Vietoris spectral sequence and give details
on how to compute with it. This allows us to merge local
homological data into the global persistent
homology. Furthermore, we detail how the classical topology
constructions inherent in the spectral sequence adapt to a
persistence perspective, as well as describe the techniques from
computational commutative algebra necessary for this extension.

The resulting computational scheme suggests a parallelization
scheme, and we discuss the communication steps involved in this
scheme. Furthermore, the computational scheme can also serve as a
guideline for which parts of the boundary matrix manipulation
need to co-exist in primary memory at any given time allowing for
stratified memory access in single-core computation. The spectral
sequence viewpoint also provides easy proofs of a homology
nerve lemma as well as a persistent homology nerve lemma. In
addition, the algebraic tools we develop to approch persistent
homology provide a purely algebraic formulation of kernel, image
and cokernel persistence~\cite{cohen2009persistent}.
\end{abstract}

\maketitle

\begin{figure}[h!]
  \begin{minipage}{1.3\linewidth}
    \subfloat[Double complex]{
      \begin{tikzpicture}[xscale=1.2]
        \foreach \x in {0,...,3} { \foreach \y in {0,...,3} { \node
            (E\x\y) at (\x,\y) {$E^0_{\x,\y}$}; } } \foreach \x/\xx in
        {0/1,1/2,2/3} { \foreach \y in {0,...,3} { \path [->]
            (E\xx\y.west) edge node [auto,swap] {\scriptsize
              $\partial$} (E\x\y.east); } } \foreach \y/\yy in
        {0/1,1/2,2/3} { \foreach \x in {0,...,3} { \path [->]
            (E\x\yy.south) edge node [auto] {\scriptsize $d$}
            (E\x\y.north); } } \draw (E03.north west) -- (E00.south
        west) -- (E30.south east); \foreach \y in {0,...,3} { \node
          (E\y) at (-1.2,\y) {$C_n(\Xcpx)$}; } \foreach \y in
        {0,...,3} { \draw [->] (E0\y.west) -- (E\y.east); }
      \end{tikzpicture}
    }\quad \subfloat[Sphere with a bump]{
      \includegraphics[width=4.5cm]{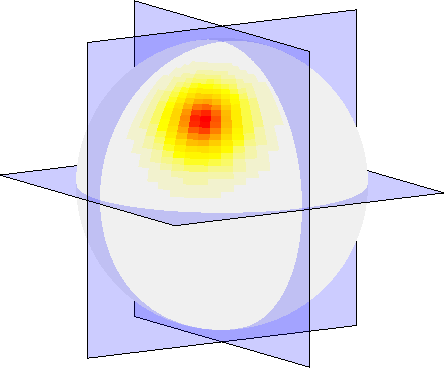}
    }\quad \subfloat[Final computation]{
      \begin{tikzpicture}[xscale=1.1]
        \foreach \x in {0,...,3} { \foreach \y in {0,...,3} { \node
            (E\x\y) at (\x,\y) {$E^\infty_{\x,\y}$}; } } 
       \foreach \y in {0,...,3} { \node (H\y)
          at (-1,\y+1) {$H_\y(\Xcpx)$}; \draw [blue,dotted] (H\y.south
          east) -- (\y,0); }
   \end{tikzpicture}
 }
\end{minipage}
\end{figure}



\clearpage

\section{Introduction}
\label{sec:introduction}


Topological data analysis~\cite{c09TandD,ghrist2008barcodes}
looks to quantify the qualitative aspects of the geometry of a
data set.  The existence and location of voids in an otherwise
densely present figure are among the topological features we can
try detect. The formalism of \emph{homology} allows us to do
this. It has had a large impact: both in geometry and mathematics
in general, and more recently in a computational topology and
geometry setting.  The key to the success of homology in
computational geometry, as well as in other application fields --
is the introduction in~\cite{elz2000} of \emph{persistent
  homology}, a multi-scale approach to the computation of
homological features given a point cloud of samples. 


The original persistence algorithm~\cite{elz2000} has a worst
case, a $O(n^3)$ running time . There has been considerable
effort looking for more efficient algorithms for computing
persistence. For special cases, such as
2-manifolds~\cite{elz2000}, very efficient, nearly linear
algorithms exist. In the general case, recent progress has shown
that persistence can be computed in matrix multiplication
time~\cite{milosavljevic2010zigzag} or alternatively in an output
sensitive way~\cite{chen2011output}.  These running times are
given in terms of simplices, so another active line of research
is into using discrete Morse-theoretic methods or homotopical
collapses~\cite{mmp-gachcm-05,mpz-habas-08,mb-cha-09,mw-chaiph-11,hmmnwjd-ehadmtc-10,dkmw-charc-11}
to reduce the size of the underlying complex while ensuring the
resulting homology computation gives the correct result. These
are particularly useful when the underlying complex is described
by a simpler object such as its
graph~\cite{z2010vr,z2010tidy,als-edsrsschd-11} (i.e. for flag
complexes).

The standard algorithm is most often used, since in practice, it
has almost linear runtime. The algorithm depends on having random
access to a representation of the boundary operator. The standard
representation is as a sparse matrix and during the course of the
standard algorithm in higher dimensions, the matrix quickly
fills-in requiring $O(n^2)$ storage space. In practice, the
requirement of random access to such a large data structure has
proven to be the limiting factor in computation. There has been
work on distributed computation, primarily focusing on ordinary
homology in sensor networks and using either coreduction
techniques as above~\cite{djmg-dccsnhm-11} or via combinatorial
Laplacians~\cite{mj-dchgg-07}.  The latter approach does not
readily extend to persistent homology and is much slower in
practice than centralized methods.



In this paper, we present an algorithm which uses a
divide-and-conquer approach to computing both homology and
persistent homology. Rather than computing on the entire complex,
we break the computation up into smaller pieces. The benefit
being that we can compute persistence independently on the each
of the pieces or patches. The key contirbution of
this paper is to show how to merge the results from the
individual patches into a global quantity\footnote{The merging
  step is what has made this kind of approach difficult in the
  past, particularly for persistence}. Our approach is based on a
tool from algebraic topology called \emph{spectral
  sequences}. Dividing the space up based on a \emph{cover}
$\Ucoll$, the spectral sequence gives the machinery necessary to
iteratively compute better approximations of the persistent
homology of a total space $\Xcpx$ from the local persistence
computations in each $\Ucpx_i\in\Ucoll$.

We focus our attention on the underlying ideas based
on spectral sequences and the resulting algorithms rather than on
the analysis of these algorithms, which depend heavily on the
cover (i.e. how we break up the underlying complex). We partially
address the problem of computing covers by presenting some
straightforward approaches which could be used. Using the
developed techniques also gives a more convenient form of the
Nerve~\cite{hatcher2005algebraic} and Persistent
Nerve~\cite{co-tpbr-08} lemmas which are widely used in
persistent homology, as well as an algebraic description of the
image, kernel and cokernel persistence introduced
in~\cite{cohen2009persistent}.

The paper is structured as follows: we first introduce the
Mayer-Vietoris spectral sequence, which is the main algebraic
tool that will be used in the paper. We adapt the treatement
of~\cite{godement1958topologie} to a persistent homology context,
and provide mathematical descriptions of the spectral sequence
computation. Using these structural results, in
Section~\ref{sec:algorithm} we give explicit descriptions of all
the needed algorithms; from basic algebraic primitives computing
kernels, images and cokernels of maps between finitely presented
$\kk[t]$-modules to the concrete computation of the spectral
sequence.  Finally, in Section~\ref{sec:applications} we discuss
several applications and corollaries to the theory and
algorithmics in this paper. In particular, we discuss strategies
for parallelizing persistence and for computing persistent
homology with a stratified memory residence approach; 
the relationship between our algebraic primitives and the
corresponding geometric primitives in~\cite{cohen2009persistent};
and we give proofs of the homological Nerve lemma and of the
persistent homological Nerve lemma that avoid the requirement of
contractibility of all the covering patches and patch
intersections.


\subsection{Background}
\label{sec:background}
For an introduction to persistent homology we refer the reader
to~\cite{eh-ct-09,z2005} along with~\cite{hatcher2005algebraic}
for an introduction to homology within algebraic topology.  The
main object which we look to compute is the \emph{persistence
  module}. This is an algebraic object introduced
in~\cite{cz2005}, which encodes the homology of a
\emph{filtration}, a sequence of increasing simplicial complexes
\begin{equation*}
\emptyset = \Xcpx_0 \subseteq\Xcpx_1 \subseteq\ldots \subseteq\Xcpx_{n-1} \subseteq \Xcpx_N  = \Xcpx
\end{equation*}
The module encodes the information in all pairwise maps between
these spaces. It turns out that, this information can be
represented by a \emph{barcode} or \emph{persistence diagram}
which are a set of coordinates which represent the \emph{birth}
and \emph{death} times of non-trivial homology classes in the filtration. 

These have geometrical meaning if the filtration is derived from
some geometric quantity. Common examples include: the sub-level
set filtration with respect to some function defined on the
space and the filtration based on distances between points in an
input point cloud.

In this paper, we compute homology over coefficients in a field
$\kk$. In this case, the homology over a space is a vector
space. However, as shown in~\cite{cz2005}, rather than computing
the homology for each space independently, we can build graded
$\kk[t]$-linear chain complexes where the degree of each
generator encodes the entry of a simplex into the
filtration. Refering to this as \emph{persistence complex}, the
authors show that persistent homology is equivalent to ordinary
homology on the persistence complex with coefficients in
$\kk[t]$.

The division of our space will be in terms of a cover. A cover of
a space $\Xcpx$ is an indexed family of sets $\Ucpx_i$ such that
$\Xcpx \subseteq \cup_i \Ucpx_i $.  The methods we develop work
with arbitrary covers of the space, making the method suitable
for any division of the underlying space (of course, the
performance will depend on the properties of the cover). Much of
the exposition is given in the language of commutative
algebra.  This not only simplifies the exposition, but
gives important insight into how the algorithms should
work. Whereever possible, we try to put things into the context
of previous approaches.




\section{The Mayer-Vietoris spectral sequence}
\label{sec:mayer-viet-spectr}

Spectral sequences have seen numerous uses in classical algebraic
topology \cite{mccleary2001user}. In the field of persistent
homology, an awareness of the role of spectral sequences has been
present from the start \cite{cz2005}, but the difficulty in
computing with spectral sequences has consistently made
alternative routes attractive. We do not give a complete
background on spectral sequences here but rather, we focus our
attention on one particular spectral sequence, which we use for
the parallelization of persistent homology. For reference,
spectral sequences are discussed extensively
in~\cite{mccleary2001user} although more accessible introductions
can be found in~\cite{chow2006you}
and~\cite{hatcher2005spectral}.

The spectral sequence we use can be viewed as generalization of the
Mayer-Vietoris long exact sequence which given two pieces, relates the
homology of their union with the homology of the pieces and their
intersection.  The same way that the long exact sequence encodes an
inclusion-exclusion principle on homology classes, the corresponding
spectral sequence provides the mechanics of an inclusion-exclusion
principle for deeper intersections than the pairwise afforded by the
long exact sequence.  Frequently mentioned in the literature (inter
alia: \cite{mccleary2001user}), we found the most detailed expositions
in
\cite{bott1982differential,brown1982cohomology,godement1958topologie}.
The spectral sequence builds up a space very similar to the blowup
complex introduced by Carlsson and Zomorodian in
\cite{cz2007localized}, but uses more of the available
information in order to knit
together local information to a global homology computation. Also Carlsson
and Zomorodian do not use the same filtration as we depend on here.
The following is based very loosely on the exposition in
\cite{godement1958topologie}.

As in \cite{cz2007localized}, we consider a filtered simplicial
complex $\Xcpx$, with a cover $\Ucoll$ by filtered subcomplexes $\Xcpx =
\bigcup_{i\in\Icpx}\Ucpx_i$ (see Figure \ref{fig:blowup}(b)).
The cover can of course be anything, but in practice we wish to divide
the full filtered complex into almost disjoint pieces so as to
maximally parallelize the computation. The condition of covering with
filtered subcomplexes is trivially fulfilled if we construct the cover
on the final space of the filtration.

We recall the definition of the blowup complex from \cite{cz2007localized} 
\[
\Xcpx^{\Ucoll} = \bigcup_{\emptyset\neq J\subseteq\Icpx}
\left(\bigcap_{j\in J} \Ucpx_j\right)\times\Delta^J
\]
where $\Delta^J$ is the $|J|$-simplex with vertices numbered from
$J$.

Fundamentally, the question is how we can compute the homology of
the blowup complex. We approach this by separating out the
problem into computational components. Whenever $n$, $p$ and $q$
occur simultaneously, we shall assume that $n=p+q$. The $q$-chains
in a $p$-fold intersection of the original cover generate
$p+q$-chains in the chain complex of the blowup complex, and thus
potentially $p+q$-homology.

\begin{figure}
  \centering
  \subfloat[An underlying topological space covered by three sets.]{
    \includegraphics[width=3.7cm,page=10]{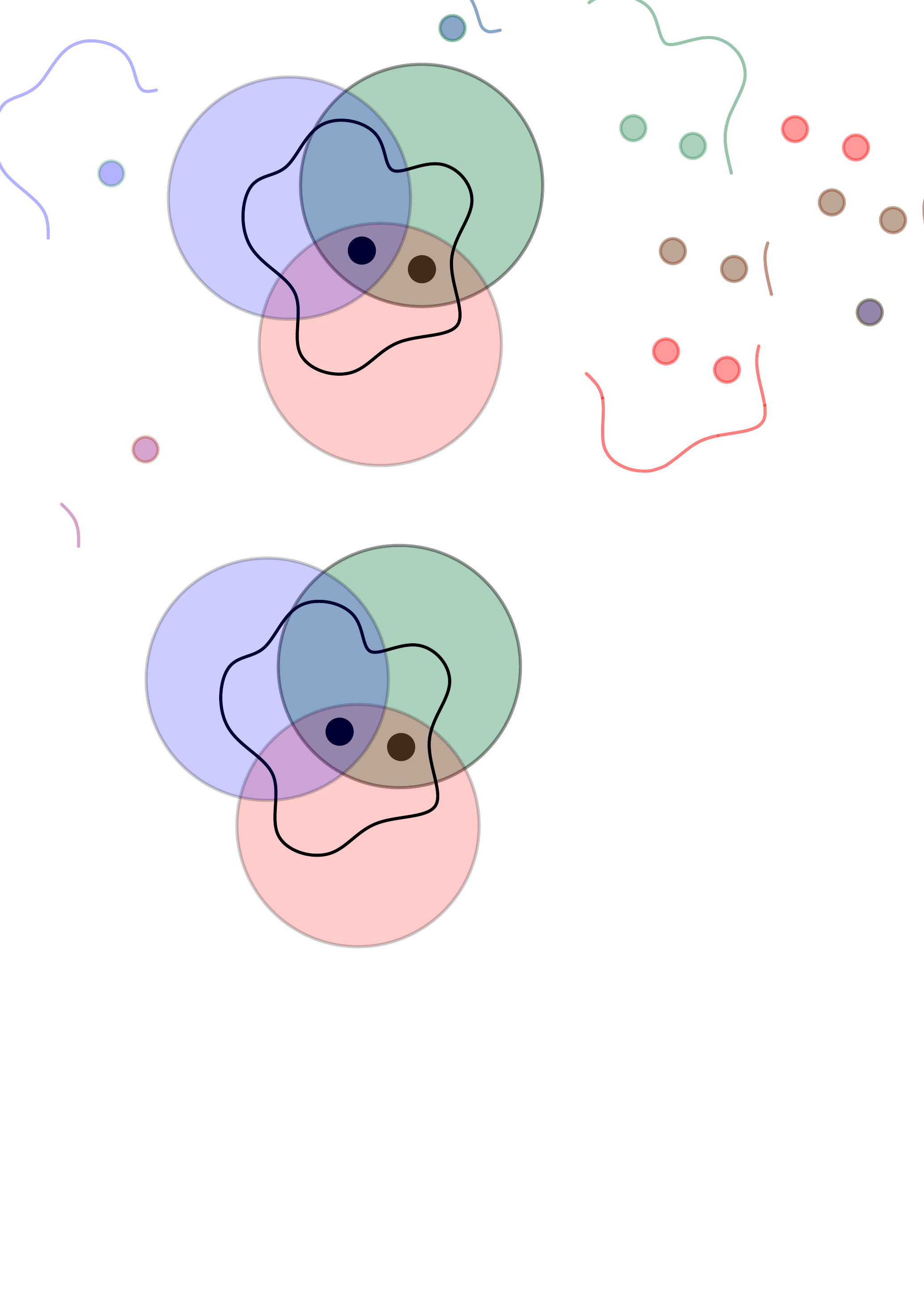}}\qquad
  \subfloat[The sets shown as disjoint spaces.]{
    \includegraphics[width=3.7cm,page=7]{blowup_1}}\qquad
  \subfloat[Blowup complex.]{
    \includegraphics[width=4.2cm,page=12]{blowup_1}}\qquad
  \caption{\small Construction of the blowup for a triple cover. In the
    blowup (c), anything in a double intersection is
  multiplied with an edge, and anything in a triple intersection is
  multiplied by a 2-simplex. While only visible in the top right one,
  all three parts of the curve going through a double intersection
  will be replaced by a triangulated quadrilateral.}
  \label{fig:blowup}
\end{figure}

We use this to organize our computation. Let us write $E^0_{p,q}$
for the $q$-chains in a $p$-fold intersection of the original
cover\footnote{The $E^0_{p,q}$ notation will be familiar to
  anyone who has worked with spectral sequences as the 0-page of
  the spectral sequence}. The individual chain complexes $E^0_{p,*}$ have
boundary maps inherent from their simplicial complex structure,
and we shall write 
$d^0_{p,q}$ for the map $E^0_{p,q}\to E^0_{p,q-1}$, and 
$d^0$ for the aggregate map $E^0_{*,*}\to E^0_{*,*}$.

\begin{figure}
  \subfloat[Double complex expression of the $E^0$ page.]{
    \begin{tikzpicture}[xscale=1.3,yscale=1.1]
      \foreach \x in {0,...,4} { 
        \foreach \y in {0,...,3} { 
          \node (E\x\y) at (\x,\y) {$E^0_{\x,\y}$}; 
        } 
      } 
      \foreach \x/\xx in {0/1,1/2,2/3,3/4} { 
        \foreach \y in {0,...,3} { 
          \path [->] 
          (E\xx\y.west) 
          edge node [auto,swap] {\scriptsize $d^1$} 
          (E\x\y.east); 
        } 
      } 
      \foreach \y/\yy in {0/1,1/2,2/3} {
        \foreach \x in {0,...,4} { 
          \path [->] (E\x\yy.south) 
          edge node [auto] {\scriptsize $d^0$} 
          (E\x\y.north); 
        } 
      }
      \draw (E03.north west) -- (E00.south west) -- (E40.south east);
      \foreach \y in {0,...,3} {
        \node (E\y) at (-1,\y) {$C_n(\Xcpx)$};
     }
     \foreach \y in {0,...,3} {
        \draw [->] (E0\y.west) -- (E\y.east);
     }
    \end{tikzpicture}
  }\hfill
  \subfloat[Typical element of the total complex.]{
    \begin{tikzpicture}[xscale=1.3,yscale=0.9]
      \foreach \x in {0,...,4} {
        \foreach \y in {0,...,4} {
          \node (E\x\y) at (\x,\y) {\phantom{$E^0_{0,0}$}}; 
        }
      }
      \node (E4) at (-1,4) {$\alpha$};
      \draw (E04.north west) -- (E00.south west) -- (E40.south east);
      \foreach \x/\y/\xx in {0/4/,1/3/0,2/2/1,3/1/2,4/0/3} {
        \node at (\x,\y) {$\alpha_\x$};
        \draw [->] (E\x\y.west) -- (E\xx\y.east);
      }
      \foreach \x/\y/\yy in {0/3/4,1/2/3,2/1/2,3/0/1} {
        \node at (\x,\y) {$\omega_\x$};
        \draw [->] (E\x\yy.south) -- (E\x\y.north);
      }
   \end{tikzpicture}
  }
  \caption{\small Double complex setup. A double complex is a grid as
    depicted in (a), with modules at each grid point, and two families
    of maps $d^0$ and $d^1$ between the modules. For a double
    complex, we require $d^0d^1+d^1d^0=0$ as well as $d^0d^0=0$
    and $d^1d^1=0$. A typical element of the total complex is some
    tuple $(\alpha_0,\dots,\alpha_n)$, and will map under the total
    differential to some other element $(\omega_0,\dots,\omega_{n-1})$
    given by summing up the contributions from each direction.}
\label{fig:doublecomplex}
\end{figure}
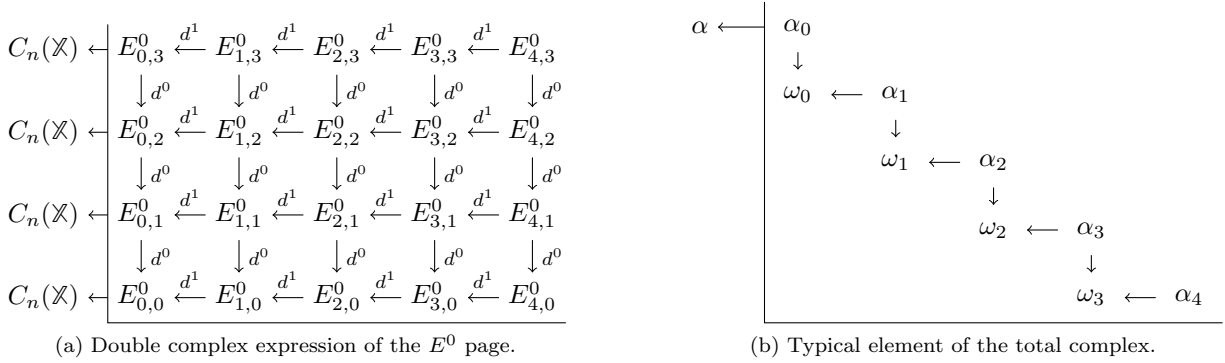

For any particular $J\subseteq\Icpx$, and any particular $j'\in J$,
there is an inclusion map $\bigcap_{j\in J}\Ucpx_j\to\bigcap_{j\in
  J\setminus j'}\Ucpx_j$. If we multiply each such inclusion map with
the sign of the corresponding term $J\to J\setminus j'$ of the nerve
complex boundary map, and sum the maps up for all $j'$, we get a map
$d^1_{p,q}: E^0_{p,q}\to E^0_{p-1,q}$. Aggregating over all chains, we
have a map $d^1: E^0_{*,*}\to E^0_{*,*}$.

A structure like this, with bi-indexed modules $E^0_{p,q}$ and
horizontal maps $d^0$ and vertical maps $d^1$ such that $d^0\circ
d^0=0$, $d^1\circ d^1=0$ and $d^0\circ d^1 + d^1\circ d^0=0$ is called
a \emph{double complex}. For double complexes, we can generate a
straightforward chain complex called the \emph{total complex} by
writing $\Tot(E^0_{*,*})_n = \bigoplus_{p+q=n} E^0_{p,q}$, and
introduce as a boundary map the aggregate map $d^0+d^1$.

For our computation of the homology of the blowup complex, we rely on
a fundamental theorem from algebraic topology
\cite{godement1958topologie}:

\begin{theorem}\label{thm:isomorphism}
  $H_*(\Tot(E^0_{*,*})) = H_*(\Xcpx)$.
\end{theorem}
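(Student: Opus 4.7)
The plan is to invoke the standard spectral sequence coming from the column filtration of the double complex $E^0_{*,*}$: its $E^1$ page is the horizontal ($d^1$) homology of each row, and it converges to $H_*(\Tot(E^0_{*,*}))$. If I can show that each row, augmented to the right by the canonical summation map $\bigoplus_i C_q(\Ucpx_i) \to C_q(\Xcpx)$, is exact, then the $E^1$ page is concentrated in a single column and agrees with the simplicial chain complex $C_*(\Xcpx)$ there. The remaining differential on that column is the simplicial boundary, so the sequence collapses at $E^2$ and yields $H_*(\Tot(E^0_{*,*})) \cong H_*(\Xcpx)$.

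The key step is the row exactness, which I would establish via a per-simplex decomposition. Each $E^0_{p,q}$ splits as a direct sum indexed by the $q$-simplices $\sigma$ of $\Xcpx$: for each $\sigma$, let $I_\sigma = \{i \in \Icpx : \sigma \in \Ucpx_i\}$, which is non-empty because $\Ucoll$ covers $\Xcpx$. The $\sigma$-summand of $E^0_{p,q}$ is freely generated by symbols $\sigma \otimes \Delta^J$ with $J \subseteq I_\sigma$ of the appropriate size, and both $d^1$ and the augmentation respect this decomposition. The $\sigma$-block of the augmented row is therefore the augmented simplicial chain complex of the standard simplex on vertex set $I_\sigma$, which is acyclic because a non-empty simplex is contractible. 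Summing over all $\sigma$ gives the desired row exactness.

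Because we are working in the persistent category, everything should be read as a statement about $\kk[t]$-modules. This presents no additional difficulty: the filtration degree is an intrinsic datum of each $\sigma$, so the $\kk[t]$-action preserves the per-$\sigma$ decomposition, and the acyclicity of each $\sigma$-summand is acyclicity of $\kk[t]$-modules rather than merely of $\kk$-vector spaces. The main obstacle I anticipate is reconciling the signs used in the paper's definition of $d^1$ (the nerve-complex boundary associated to $J \to J \setminus \{j'\}$) with the simplicial boundary of $\Delta^{I_\sigma}$, so that the per-$\sigma$ decomposition is compatible not just as modules but as chain complexes. Once that sign bookkeeping is settled, the theorem reduces to the elementary acyclicity of a non-empty simplex together with the collapse of the column-filtration spectral sequence.
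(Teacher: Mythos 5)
Your proposal is essentially correct, and the first thing to note is that the paper does not actually prove Theorem~\ref{thm:isomorphism}: it explicitly defers to~\cite{godement1958topologie} ("we shall not prove the theorem in its full extent here"), so there is no in-paper argument to compare against. What you give is the standard generalized Mayer--Vietoris argument found in the cited sources, and it does fill the gap. Three remarks. First, a naming slip: the spectral sequence whose $E^1$ page is the horizontal ($d^1$) homology of the rows comes from filtering $\Tot(E^0_{*,*})$ by the $q$ (row) index, not from the column filtration that the paper itself uses for its algorithm; since the double complex is first quadrant, both filtrations are bounded, both spectral sequences converge to $H_*(\Tot(E^0_{*,*}))$, and your argument goes through once the label is corrected. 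Second, the sign bookkeeping you flag is not a real obstacle: $d^1$ is defined with exactly the nerve-complex boundary signs, so the $\sigma$-block of each row is literally the augmented simplicial chain complex of the full simplex on the vertex set $I_\sigma$; alternatively, a contracting homotopy sending $\sigma\otimes\Delta^{J}$ to $\pm\,\sigma\otimes\Delta^{J\cup\{i_0\}}$ for a fixed $i_0\in I_\sigma$ kills the reduced homology independently of any sign convention. Third, for the persistent statement your phrase "the filtration degree is an intrinsic datum of each $\sigma$" tacitly assumes each simplex enters every patch containing it (and every intersection) at the same time it enters $\Xcpx$, i.e.\ the restriction filtration; if the filtered subcomplexes are allowed to delay entry times, the clean fix is to check exactness of the augmented $\sigma$-block one graded degree at a time, since a sequence of graded $\kk[t]$-modules with degree-zero maps is exact iff it is exact in each degree, and at each filtration stage the block is again the augmented chain complex of the non-empty simplex on the patches containing $\sigma$ at that stage. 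With these adjustments, the collapse at $E^2$ and the absence of extension problems (everything lies in a single column) yield $H_*(\Tot(E^0_{*,*}))\cong H_*(\Xcpx)$ as you outline, and the same degree-wise argument gives the graded $\kk[t]$-module version needed for persistence.
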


While we shall not prove the theorem in its full extent here, we still
need the statement of the isomorphism for the development of our
algorithm. A chain in $E^0_{1,q}$ is a collection of $q$-chains in the
covering patches. Such a chain can be mapped into $C_*\Xcpx$ by simply
summing up the component chains, included into the full complex. Just
summing up these components may both introduce and remove homological
features present in the composite chain -- and the exact behaviour of
these introductions and removals is encoded into the other summands of
the direct sum. Thus, the isomorphism identifies a class 
$[(\alpha_0,\dots,\alpha_n)]$ with a class $[\alpha]$ precisely
if $\alpha$ is the result of summing up the components of
$\alpha_0$. The chains $\alpha_1,\dots,\alpha_n$ capture, in a way
reminiscent of the inclusion-exclusion principle, the exact ways in
which $\alpha$ differs from merely summing up $\alpha_0$.

The columns in the double complex provide a filtration on
$\Tot(E^0_{p,q})$. We write $E^0_{\leq p,*}$ for the subcomplex
consisting of the $p$ leftmost columns, and write
$L_{n,p}=\img(H_n(\Tot(E^0_{\leq p,*}))\to H_n(\Tot(E^0_{*,*})))$ for
the image of the map induced by the inclusion of a particular
filtration step into the entire double complex. Since the $E^0_{\leq
  p,*}$ filter $E^0_{*,*}$, the $L_{n,p}$ filter
$H_n(\Tot(E^0_{*,*}))$. Furthermore, a single class
$[(\alpha_0,\dots,\alpha_n)]$ is in $L_{n,p}$ if and only if the class
has a representative on the form
$(\alpha'_0,\dots,\alpha'_p,0,\dots,0)$.

While computing $L_{n,p}$ completely ends up being similar in
difficulty to the original homology computation, we can consider
the computation of just the elements that get added at any
particular step of the filtration. In other words, if we compute
$L_{n,p}/L_{n,p-1}$, this tells us about only those elements of
$H_*(\Xcpx)$ that show up at precisely the $p$-fold
intersections. Clearly, taking the direct sum over all $p$ of
these quotients, we can compute a complete description of the
entire homology $H_*(\Xcpx)$.

Given some element $[(\alpha_0,\dots,\alpha_p,0,\dots,0)]\in L_{n,p}$,
the equivalence class it belongs to in $L_{n,p}/L_{n,p-1}$ is
completely determined by the value of $\alpha_p$. The particular
$\alpha_p$ representing such an element is not uniquely determined,
nor is every element of $E^0_{p,q}$ a valid choice for some element
determining an equivalence class of $L_{n,p}/L_{n,p-1}$. We shall
write $Z^\infty_{p,q}$ for the $\alpha_p$ that do represent some
equivalence class of $L_{n,p}/L_{n,p-1}$, and we shall write
$B^\infty_{p,q}$ for the elements of $E^0_{p,q}$ that represent the 0
class of $L_{n,p}/L_{n,p-1}$.

The spectral sequence here aims to provide increasingly accurate
approximations to these $Z^\infty_{p,q}$ and $B^\infty_{p,q}$. These
approximations are computed by introducing the conditions defining
$Z^\infty_{p,q}$ and $B^\infty_{p,q}$ one at a time.

\paragraph{\bf The case $Z^r_{p,q}$} 
Fix some $\alpha_p\in E^0_{p,q}$. For $\alpha_p$ to be in
$Z^\infty_{p,q}$, there has to be a sequence
$\alpha_{p-1},\dots,\alpha_0$ as in Figure
\ref{fig:cyclesandboundaries} (a) such that
\begin{equation}\label{eq:zinfinity}
d^0\alpha_p = 0 \qquad 
d^0\alpha_{p-1}=d^1\alpha_p \qquad 
\dots \qquad
d^0\alpha_0 = d^1\alpha_1
\end{equation}

These equations express the condition
$d(\alpha_0,\dots,\alpha_p,0,\dots,0)=0$ in $\Tot(E^0_{*.*})$.

We define $Z^r_{p,q}$ to be the set of all $\alpha_p$ that fulfill the
conditions
\begin{equation}\label{eq:cycle_equations}
d^0\alpha_p = 0 \qquad 
d^0\alpha_{p-1}=d^1\alpha_p \qquad 
\dots \qquad
d^0\alpha_{p-r} = d^1\alpha_{p-r+1}
\end{equation}
and in particular, we can compute $Z^r_{p,q}$ recursively as those
elements $\alpha_p$ of $Z^{r-1}_{p,q}$ for which there in addition to
the $\alpha_{p-1},\dots,\alpha_{p-r+1}$ elements guaranteed to exist
since $\alpha_p\in Z^{r-1}_{p,q}$, there is also an $\alpha_{p-r}\in E^0_{p-r,q+r}$
such that $d^0\alpha_{p-r} = d^1\alpha_{p-r+1}$.
It is clear from these definitions that $Z^p_{p,q}=Z^\infty_{p,q}$.

\paragraph{\bf The case $B^r_{p,q}$}
\label{sec:case-br_p-q}
As above, we fix $\alpha_p\in E^0_{p,q}$. In order for $\alpha_p$ to
be in $B^\infty_{p,q}\subseteq Z^\infty_{p,q}$, we additionally need a
collection of elements $\beta_p,\dots,\beta_{n+1}$ as in Figure
\ref{fig:cyclesandboundaries} (b) such that 
\begin{equation}\label{eq:boundary_equations}
d^0\beta_p+d^1\beta_{p+1} = \alpha_p \qquad
d^0\beta_{p+1}+d^1\beta_{p+2} = 0 \qquad
\dots \qquad
d^0\beta_n+d^1\beta_{n+1} = 0
\end{equation}

The simplest case when this happens is if there is some $\beta_p$ such
that $d^0\beta_p = \alpha_p$. In this case we can choose $\beta_j=0$
for all other $\beta_j$. We write $B^0_{p,q}$ for the vector
space of all such $\alpha_p$.

In general, we shall define $B^r_{p,q}$ to be the set of $\alpha_p$
for which there are $\beta_p,\dots,\beta_{p+r}$ satisfying the equations
\[
d^0\beta_p+d^1\beta_{p+1} = \alpha_p \qquad
d^0\beta_{p+1}+d^1\beta_{p+2} = 0 \qquad
\dots \qquad
d^0\beta_{p+r-1}+d^1\beta_{p+r} = 0 \qquad
d^0\beta_{p+r}=0
\]

In particular $\alpha_p\in B^r_{p,q}$ if $\alpha_p\in B^{r-1}_{p,q}$
or if there are $\beta_p,\dots,\beta_{p+r}$ satsifying all these equations.

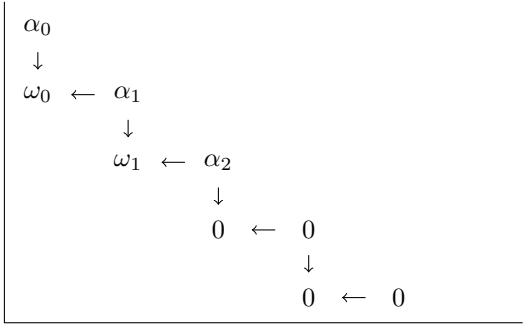
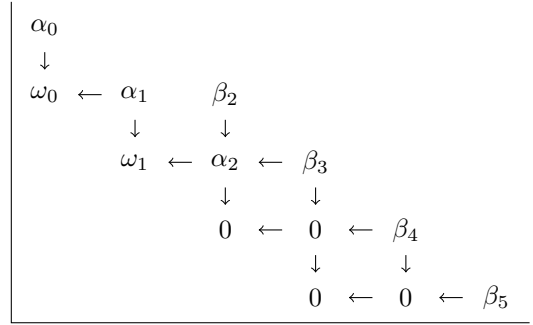
\begin{figure}
  \subfloat[Condition for a cycle:
  $(\alpha_0,\dots,\alpha_2,0,\dots,0)$ is a cycle if there are
  $\omega_0,\omega_1$ such that $d^0\alpha_0=\omega_0=d^1\alpha_1$,
  $d^0\alpha_1=\omega_1=d^1\alpha_2$ and $d^0\alpha_2=0$.]{
    \begin{tikzpicture}[yscale=0.9,xscale=1.2]
      \foreach \x in {0,...,5} {
        \foreach \y in {0,...,4} {
          \node (E\x\y) at (\x,\y) {\phantom{$E^0_{0,0}$}}; 
        }
      }
      \draw (E04.north west) -- (E00.south west) -- (E50.south east);
      \node at (0,4) {$\alpha_0$};
      \foreach \x/\y/\xx in {1/3/0,2/2/1} {
        \node at (\x,\y) {$\alpha_\x$};
        \draw [->] (E\x\y.west) -- (E\xx\y.east);
      }
      \foreach \x/\y/\xx in {3/1/2,4/0/3} {
        \node at (\x,\y) {$0$};
        \draw [->] (E\x\y.west) -- (E\xx\y.east);
      }
      \foreach \x/\y/\yy in {0/3/4,1/2/3} {
        \node at (\x,\y) {$\omega_\x$};
        \draw [->] (E\x\yy.south) -- (E\x\y.north);
      }
      \foreach \x/\y/\yy in {2/1/2,3/0/1} {
        \node at (\x,\y) {$0$};
        \draw [->] (E\x\yy.south) -- (E\x\y.north);
      }
   \end{tikzpicture}
  }\hfill
  \subfloat[Condition for a boundary:
  $(\alpha_0,\dots,\alpha_2,0,\dots,0)$ is a boundary if in addition
  to the $\omega_0,\omega_1$ in (a), there are also
  $\beta_2,\dots,\beta_5$ such that $d^0\beta_2+d^1\beta_3=\alpha_2$,
  $d^0\beta_3+d^1\beta_4=0$, $d^0\beta_4+d^1\beta_5=0$ and (trivially)
  $d^0\beta_5=0$.]{
    \begin{tikzpicture}[yscale=0.9,xscale=1.2]
      \foreach \x in {0,...,5} {
        \foreach \y in {0,...,4} {
          \node (E\x\y) at (\x,\y) {\phantom{$E^0_{0,0}$}}; 
        }
      }
      \draw (E04.north west) -- (E00.south west) -- (E50.south east);
     \foreach \x/\y/\yy in {0/4/5,1/3/4,2/2/3} {
        \node at (E\x\y) {$\alpha_\x$};
     }
      \draw [->] (E23.south) -- (E22.north);
      \foreach \x/\y/\yy in {3/1/2,4/0/1} {
        \node at (E\x\y) {$0$};
        \draw [->] (E\x\yy.south) -- (E\x\y.north);
      }
     \node at (E23) {$\beta_2$};
      \foreach \x/\xx/\y in {3/2/2,4/3/1,5/4/0} {
        \node at (E\x\y) {$\beta_\x$};
        \draw [->] (E\x\y.west) -- (E\xx\y.east);
      }
      \foreach \x/\y/\yy in {0/3/4,1/2/3} {
        \node at (\x,\y) {$\omega_\x$};
        \draw [->] (E\x\yy.south) -- (E\x\y.north);
      }
      \foreach \x/\y/\yy in {2/1/2,3/0/1} {
        \node at (\x,\y) {$0$};
        \draw [->] (E\x\yy.south) -- (E\x\y.north);
      }
      \foreach \x/\y/\xx in {1/3/0,2/2/1,3/1/2,4/0/3} {
       \draw [->] (E\x\y.west) -- (E\xx\y.east);
      }
   \end{tikzpicture}
  }
\caption{\small Cycles and boundaries in $E^r_{p,q}$.}
\label{fig:cyclesandboundaries}
\end{figure}

\paragraph{\bf $Z^r_{p,q}$ and $B^r_{p,q}$ as kernel and image of a differential}
\label{sec:zr_p-q-br_p}
Suppose $\alpha_p\in Z^r_{p,q}$. Then there exist
$\alpha_{p-1},\dots,\alpha_{p-r}$ as above. Let
$\omega_{p-r-1}=d^1\alpha_{p-r}\in Z^r_{p-r-1,q+r}$. This element is
well-defined, depending only on $\alpha_p$, up to a sequence of error
terms for all the $\alpha_{p-i}$, each of which is guaranteed to be in
the appropriate $B^r_{p-i,q+i}$. Hence, the map
$[\alpha_p]\mapsto[\omega_{p-r-1}]$ is a well-defined map
$d^r: Z^r_{p,q}/B^r_{p,q}\to Z^r_{p-r-1,q+r}/B^r_{p-r-1,q+r}$. This
collection of quotient modules is often denoted $E^r_{p,q} =
Z^r_{p,q}/B^r_{p,q}$ and called the $E^r$-page. 

The elements $\alpha_p,\dots,\alpha_{p-r}$ are an appropriate set of
$\beta_j$ to witness $\omega_{p-r-1}\in B^r_{p-r-1,q+r}$. Re-indexing,
we can conclude that $\alpha_p\in B^{r+1}_{p,q}$ if and only if
$[\alpha_p]=d^r[\beta_{p+r+1}]$ for some $\beta_{p+r+1}\in
Z^r_{p+r+1,q-r}$.

Finally, if $w_{p-r-1}\in B^r_{p-r-1,q-r}$, or equivalently if
$d^r[\alpha_p] = [0]$, then it follows that $\alpha_p\in Z^r_{p,q}$.

All these statements can be easily proven by diagram chases. 

If at any stage of the computation, all the maps $d^r:E^r_{*,*}\to
E^r_{*,*}$ are the zero-map, then the corresponding $Z^r_{*,*}$ and
$B^r_{*,*}$ already satisfy all their conditions, and so for that $r$,
it already holds that $Z^r_{*,*}=Z^\infty_{*,*}$ and
$B^r_{*,*}=B^\infty_{*,*}$. Many proofs in algebraic topology work by
proving that for a low value of $r$, the map $d^r$ vanishes,
eliminating the need to construct the higher index maps. In Section
\ref{sec:nerve-lemma} we shall see an application of this principle.

\section{Algorithms}
\label{sec:algorithm}

We shall state algorithms here both for the underlying algebraic
operations, as well as for the resulting spectral sequence and
persistence computations.

\subsection{Algebraic operations}
\label{sec:algebraic-operations}

In algebra textbooks written with a sufficiently general approach
to linear algebra, the development of matrix arithmetic and its
uses in solving linear equations will be done in a manner
appropriate for a generic \emph{Euclidean
  domain}~\cite{sims1984abstract}\footnote{This is a slightly
  more restrictive class than principle ideal domains.}. It is a
well-known fact that in addition to any field, further Euclidean
domains include $\mathbb Z$, as well as $\kk[t]$ for any field
$\kk$. In this setting, a few useful facts hold: any
submodule of a free module is free, and Gaussian elimination is an
appropriate algorithm to compute matrix ranks, function kernels and
images, reducing bases to remove redundancies and to form appropriate
reduction systems.

In light of this, we draw the reader's attention to the
implications for persistence. A persistence chain complex is a free graded
$\kk[t]$-module with a generator in degree $k$ for each simplex that
enters the filtered chain complex in the $k$th filtration
step. Persistent homology is a finitely presented graded
$\kk[t]$-module, with generators given by a cycle basis and relations
given by a boundary sub-basis of the cycle basis. 

We can assume -- especially in the topological case -- that if
some $t^k\sum\lambda_jt^jm_j$ is a boundary basis element, then
the cycle was a cycle already $k$ time-steps earlier, represented
by $\sum\lambda_jt^jm_j$.  We shall represent finitely presented
modules precisely as a segregated pair of bases: a generators
basis and a separate relations basis. Because of this nice
behaviour of the generators corresponding to specific
relations, we can further allow ourselves to keep the relations
basis in a row-reduced form -- to ensure that the division
algorithm produces nice normal forms of all elements -- and the
generators basis reduced with respect to the relations.

As the reader recalls, Gaussian elimination puts a matrix into a
normal form by clearing out row by row by using the leftmost
column in the matrix with no non-zero entries above of the
current row to cancel out any entries occurring later. When
working with graded $\kk[t]$-modules, some care has to be taken
to avoid reducing any columns using columns that do not exist in an
early enough grading. This can be accomplished by sorting the
columns in the matrix in order of rising degree, and only reducing
to the right. Tracking the operations performed during a reduction
allows for the computation of a kernel by reading off the
operations that produced all-zero columns in the reduced matrix.

Given a function $f: M/Q\to N/P$, where $M/Q$ and $N/P$ are both
represented by lists of generators and lists of relations as discussed
above, the following three constructions come naturally:

\paragraph{\bf Image modules}
\label{sec:image-modules}
To compute $\img f$, we compute the list $f(m)$ for $m\in M$. Since
$f(Q)\subseteq P$, the relations on $\img f$ are the relations listed
in $P$, and so $\img f=f(M)/P$.

\paragraph{\bf Cokernel modules}
\label{sec:cokernel-modules}
To compute $\coker f$, we need to compute $(N/P)/(f(M/Q))$. Again,
$f(Q)\subseteq P$, and thus we get the resulting module by imposing
all the images $f(m)$ of the generators of $M$ as additional relations.
Hence, $\coker f=N/(f(M)\cup Q)$.

\paragraph{\bf Kernel modules}
\label{sec:kernel-modules}
The kernel is, just as in~\cite{cohen2009persistent}, more
complex than the image and cokernel. The computation of the
kernel module as well as its embedding into $M/Q$ is computed in
a two-step process. An element of $\ker f$ is going to be
represented by some element of the free module $M$ such that
$f(m)\in P$. We can compute this by computing the kernel $F_K$ of
the map $M\oplus P\to N$ given by $(m,p)\mapsto f(m)-p$. This is
a kernel of a map between free $\kk[t]$-modules, and thus
computable with the matrix algorithm above. The actual generators
in $M$ are given by the projection onto the first summand $\pi_M:
(m,q)\mapsto m$, and we write $K=\pi_M(F_K)$.

The resulting elements form a free submodule of $M$ giving generators
of the kernel module. The kernel module will have relations, however,
and not all relations in $Q$ are going to be in the submodule
$K$. Hence, for a full presentation, we need to find the relations
that are actually in $K$, which we can do with another kernel
computation. We consider the map $K\oplus Q\to M$ given by
$(k,q)\mapsto k-q$ and compute its kernel. The projection $\pi_K:
(k,q)\mapsto k$ gives us a set of relations.

The constructions above, as well as Gr\"obner basis approaches to
handling free modules over polynomial rings are described in
\cite{eisenbud1995commutative}.

\subsection{Spectral sequence differentials}
\label{sec:spectr-sequ-diff}
Using the algebraic operations described above, we describe an
algorithm for computing persistent homology via the Mayer-Vietoris
spectral sequence. This algorithm is iterative and \emph{converges}
to the correct cycles $Z^\infty$ and boundaries $B^\infty$ so that the
isomorphism in Theorem \ref{thm:isomorphism}  holds.

Converge occurs when the spectral sequence collapses.  In
particular, if the differential $d^r$ is the zero-map everywhere,
then $Z^{r+1}=Z^r$, and $B^{r+1}=B^r$, and we can verify that
$d^{r+1}$ is also going to be the zero-map everywhere, and this
continues all the way up. One powerful use of this is that once
the maps $d^r$ traverse the double complex in such a way that
they only hit trivial modules, the corresponding $d^r$ has to be
the zero-map.  This happens, for instance, if the double complex
is contained in a $p\times q$ bounding box where $p$ and $q$ are
maximal dimensions of the complex and nerve respectively.  In
this case, the spectral sequence will collapse for
$r>\min(p,q)$. At that point all the maps will map outside the
box and so can only hit trivial modules.

We first describe the initialization of the algorithm. Beginning with
the double complex of Figure~\ref{fig:doublecomplex}(a), we
first compute homology vertically, replacing each $E^0_{p,q}$ with the
quotient module $Z^1_{p,q}/B^1_{p,q}$ where $Z^1_{p,q}=\ker
d^0_{p,q}$ and $B^1_{p,q}=\img d^0_{p,q+1}$. This computes, in effect,
the persistent homology of the chain complex over each simplex in the
nerve of the cover. We represent $E^1_{p,q}$ by its segregated basis,
as described in Section~\ref{sec:algebraic-operations}, maintaining
the free modules $Z^1_{p,q}$ and $B^1_{p,q}$ as well as the preboundary
$I^1_{p,q}$ defined by the equation $d^0_{p,q}I^1_{p,q}=B^0_{p,q-1}$.

Next, we describe the iterative step. We will, for now, assume that we
are given the differentials and describe their computation afterwards.

At the $r$-th iteration, at each node $(p,q)$, we compute the kernel
module of $d^r_{p,q}$, storing it as $Z^r_{p,q}$ and the image module
of $d^r_{p+r-1,q-r}$, storing this as $B^r_{p,q}$. Since $d^r$ is a
differential, we know that $B^{r+1}_{p,q}\subseteq Z^{r+1}_{p,q}$, and by
reducing both generating sets we can find a good presentation of the
quotient $Z^{r+1}_{p,q}/B^{r+1}_{p,q}=E^{r+1}_{p,q}$. 

As for the concrete maps $d^r$, we compute them as we go along. In
particular, $d^1$ is the induced map on $E^1$ by giving the inclusion
maps of deeper intersections into more shallow the coefficients from
the nerve boundary map. The higher differentials $d^r$ are computed
according to the machinery of Section~\ref{sec:mayer-viet-spectr}. In
particular, the images of $d^{r-1}_{p,q}$ are retained. For an element
$x\in E^r_{p,q}$ in the kernel of $d^{r-1}_{p,q}$ we can find a
corresponding image element $d^{r-1}_{p,q}x \in
B^0_{p-r+2,q+r-1}$. Therefore, we can find some $y\in I^1_{p-r+2,q+r}$
such that $d^0_{p-r+2,q+r}y=d^{r-1}_{p,q}x$. We define $d^rx =
d^1y$. This corresponds precisely to the description in
Section~\ref{sec:mayer-viet-spectr}, where the theoretical motivation
can be found explaining exactly why these operations are all possible
to perform and appropriate. In each iteration, we use the image and
kernel module computations between finitely presented modules and maps
that are computed in earlier stages.

For a lower level description of the algorithm see Appendix~\ref{sec:implementation}.

\section{Explicit example}
\label{sec:explicit-example}

Consider the sphere in Figure~\ref{fig:explicit_example}(a). Its function will introduce the
hot spot and its surrounding area significantly later than the
remaining sphere. Computing persistence over the entire sphere, we would expect to see a signature somewhat
like the one in Figure~\ref{fig:explicit_example}(c).
\begin{figure}
\subfloat[]{
      \includegraphics[width=4.5cm]{spherepeak}
}\hfill
\subfloat[]{
      \includegraphics[width=3cm]{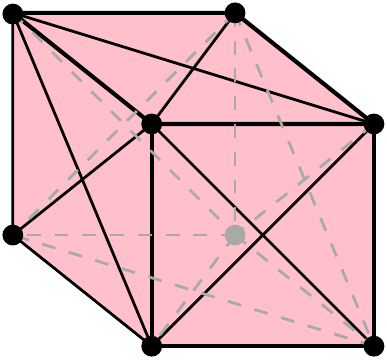}
}\hfill
\subfloat[]{
\begin{tikzpicture}
  \begin{scope}[yshift=1cm]
    \draw [->] (0,0) -- (0,1); 
    \draw [->] (0,0) -- (4,0); 
    \node at (-0.5,0.5) {$\beta_0$}; 
    \draw [->] (0.1,0.5) -- (4,0.5);
  \end{scope}
  \begin{scope}[yshift=-1cm]
    \draw [->] (0,0) -- (0,1); 
    \draw [->] (0,0) -- (4,0); 
    \node at (-0.5,0.5) {$\beta_2$}; 
    \draw [->] (3.5,0.5) -- (4,0.5);
  \end{scope}
\end{tikzpicture}}
\caption{\small \label{fig:explicit_example} (a) A sphere with
  constant function everywhere except one hotspot. The partition
  of the space is along the axes. Each partition is then
  thickened slighlty as described in
  Section~\ref{sec:part-comp-homol} to give the covering. (b) The
  nerve of the covering. Note that the faces have tetrahedra as
  all 4 vertices of the square intersect. (c) The persistent
  barcode of the sphere, $H_1$ is trivial so it is not shown.}
\end{figure}

We can divide the computation on the sphere into octants divided by
the axis planes, as in Figure~\ref{fig:explicit_example}(a) and
(b). By growing each octant by the tubular neighbourhood method in
Section \ref{sec:cube-partitions}, we get separate computational
units, with contractible intersections of all higher orders. Hence,
after computing persistent homology on all patches, and on pairwise,
triple and quadruple intersections, we get a double complex of
persistence modules shown in Figure~\ref{fig:example_comp}(b).

\begin{figure}
\subfloat[ $E^1$ page.]{
\begin{tikzpicture}[xscale=1.4]
 \node (E40) at (4,0) {$\kk[t]^6$};
  \node (E30) at (3,0) {$\kk[t]^{24}$};
  \node (E20) at (2,0) {$\kk[t]^{24}$};
  \node (E10) at (1,0) {$\kk[t]^8$};
  \node (E11) at (1,1) {$\kk[t]/t^m$};
  \node (E21) at (2,1) {$0$};
  \node (E31) at (3,1) {$0$};
  \node (E41) at (4,1) {$0$};
  \draw [->] (E20.west) -- (E10.east);
  \draw [->] (E30.west) -- (E20.east);
  \draw [->] (E40.west) -- (E30.east);   
  \draw [->] (0.5,-0.4) -- (0.5,1.3); 
  \draw [->] (0.5,-0.4) -- (4.2,-0.4); 
\end{tikzpicture}}\hfill
\subfloat[ Computing $d^2$ on $E^2$. ]{
\begin{tikzpicture}[xscale=1.4]
  \node (E40) at (4,0) {$0$};
  \node (E30) at (3,0) {$\kk[t]$};
  \node (E20) at (2,0) {$0$};
  \node (E10) at (1,0) {$\kk[t]$};
  \node (E11) at (1,1) {$\kk[t]/t^m$};
  \draw [->] (E30) -- (E11); 
  \draw [->] (0.5,-0.4) -- (0.5,1.3); 
  \draw [->] (0.5,-0.4) -- (4.2,-0.4); 
\end{tikzpicture}}
\hfill
\subfloat[ $E^3$ page, collapsed spectral sequence. ]{
  \begin{tikzpicture}[xscale=1.4]
  \node (E40) at (4,0) {$0$};
  \node (E30) at (3,0) {$t^m\kk[t]$};
  \node (E20) at (2,0) {$0$};
  \node (E10) at (1,0) {$\kk[t]$};
  \node (E11) at (1,1) {$0$}; 
  \draw [->] (0.5,-0.4) -- (0.5,1.3); 
  \draw [->] (0.5,-0.4) -- (4.2,-0.4); 
\end{tikzpicture}}
\caption{\small \label{fig:example_comp} Example computation for the sphere.}
\end{figure}
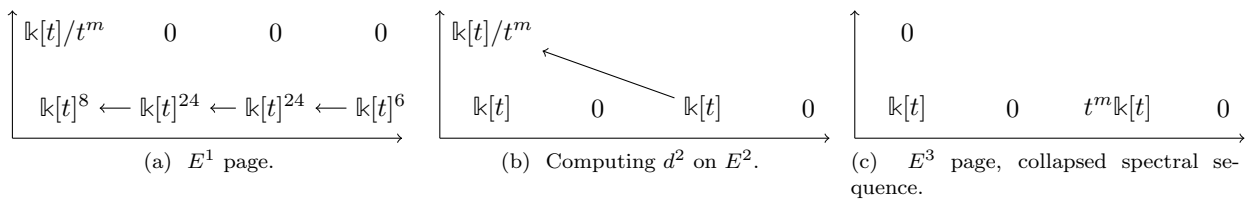

The exponents come from the 6 vertices of the octahedron generating 6
quadruple intersection. In each of these, there are 4 different triple
intersections, generating a total of 24 triple intersections. Each of
the 6 tetrahedra has 6 edges, corresponding to all the double
intersections at each vertex -- however, this overcounts each the 12 double
intersections corresponding to edges of the octahedron. Thus, out of
the 36 double intersections, only 24 remain after compensating for the
overcounting. Finally, each of the 8 faces of the octahedron
corresponds to a single covering patch. 

After computing the corresponding $E^2$-page, the redundancies in
the duplicate representations of all components have largely
resolved, and we are left with a double complex of persistence
modules shown  in Figure~\ref{fig:example_comp}(b).
Here, the map $k[t]\to k[t]/t^m$ maps the generator onto the
generator. The kernel of this map is $t^mk[t]\subset k[t]$, and the
image is the entire module. This gives our final page of the spectral
sequence shown in Figure~\ref{fig:example_comp}(c).
Reading off the modules diagonally, we recover the persistent
homology groups $H_0=k[t]$ and $H_2=t^mk[t]$, which is precisely
what we computed before. 

\section{Applications}
\label{sec:applications}

\subsection{Computing partitions}
\label{sec:part-comp-homol}

The input to use the Mayer-Vietoris spectral sequence is a
filtered simplicial complex is covered by subcomplexes.  Given a
point cloud $\Xcpx$, and a method $C_*$ (such as Vietoris-Rips or
Witness complexes) to construct filtered a simplicial complex
$C_*\Xcpx$ from $\Xcpx$, we shall describe a few schemes to
divide $C_*\Xcpx$ into a covering by appropriate subcomplexes
with small intersections. Minimizing the size of the
intersections, we minimize the amount of work which must be done
in higher pages.

Recall that the \emph{closed star} in a simplicial complex of a
subset of its simplices is the subcomplex consisting of all
simplices that intersect any simplex in the subset. We now state
two lemmas and refer the reader to
Appendix~\ref{sec:proofs-select-result} for the proofs.

\begin{lemma}\label{lemma:closedstartubular}
  Suppose $C_*$ has the property that any simplex has diameter at most
  $\varepsilon/2$, and that the presence of any simplex is decided using
  points of a distance at most $\varepsilon/2$ from any of the vertices
  of the simplex.
 
  Then the closed star of a subset $S\in C_*\Xcpx$ is going to be
  contained in $C_*T_\varepsilon S$ for $T_\varepsilon S$ the $\varepsilon$-tubular
  neighbourhood of $S$ in $\Xcpx$, the collection of points
  $x\in\Xcpx$ such that $d(x,S)<\varepsilon$.
\end{lemma}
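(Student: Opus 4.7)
The plan is to unpack the two definitions and chase distances. Let $\tau$ be a simplex in the closed star of $S$; by definition $\tau$ meets some simplex $\sigma \in S$, so there is at least one vertex $v$ common to $\tau$ and $\sigma$, and in particular $v$ lies in $S$. The goal is to show two things: that every vertex of $\tau$ is in $T_\varepsilon S$, and that the criterion by which $C_*$ certifies $\tau$ uses only points in $T_\varepsilon S$. Both should fall out of the triangle inequality applied to the two hypotheses on $C_*$.

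First I would handle vertices. If $w$ is a vertex of $\tau$, then $w$ and $v$ lie in the same simplex $\tau$, whose diameter is at most $\varepsilon/2$ by assumption. Since $v \in S$, we get $d(w, S) \leq d(w, v) \leq \varepsilon/2 < \varepsilon$, so $w \in T_\varepsilon S$. This already shows $\tau$ has all its vertices inside the tubular neighbourhood.

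Next I would check that the decision procedure for $\tau$ stays inside $T_\varepsilon S$. By the second hypothesis on $C_*$, the presence of $\tau$ is determined by points $x \in \Xcpx$ each of which lies within distance $\varepsilon/2$ of some vertex of $\tau$. Chaining the triangle inequality with the previous step, for any such $x$ there is a vertex $w$ of $\tau$ and a vertex $v \in S$ with $d(x,w) \leq \varepsilon/2$ and $d(w,v) \leq \varepsilon/2$, hence $d(x,S) \leq \varepsilon$. Consequently the same simplex $\tau$ is present in $C_*T_\varepsilon S$, which proves the containment.

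The main subtlety will be the open versus closed distinction: the hypotheses produce $d(x,S) \leq \varepsilon$ while $T_\varepsilon S$ is defined with strict inequality. I would address this either by observing that at least one of the two diameter bounds can be taken strict (since $\tau$ has positive codimension in the chain of inequalities unless $w = v$, in which case the second bound is strict automatically), or more cleanly by passing to $T_{\varepsilon + \eta} S$ for arbitrarily small $\eta > 0$ and noting the argument is uniform in $\eta$; this gives the containment in $T_\varepsilon S$ after taking $\eta \to 0$ since $C_*$ produces only finitely many relevant simplices at each scale. This is the only technical wrinkle; the rest is essentially two applications of the triangle inequality.
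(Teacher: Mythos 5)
Your argument is correct and follows essentially the same route as the paper's own proof: a simplex in the closed star shares a vertex with $S$, the diameter bound places all of its points within $\varepsilon/2$ of $S$, and the decision-radius hypothesis plus the triangle inequality places every point influencing its presence within $\varepsilon$ of $S$, hence inside $C_*T_\varepsilon S$. The strict-versus-non-strict inequality wrinkle you flag is real but is glossed over by the paper itself (its proof concludes ``at most $\varepsilon$ away from $S$, and thus by $T_\varepsilon S$''), so no further repair is needed beyond what you already have.
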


\begin{lemma}\label{lemma:subcomplexcovering}
  Suppose $\Xcpx$ is partitioned into subsets $\Ucpx_j$. Then a
  covering of $C_*\Xcpx$ by subcomplexes is given by the collection of
  subcomplexes $C_*T_\varepsilon\Ucpx_j$.
\end{lemma}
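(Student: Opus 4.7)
The plan is to verify the two components of the claim separately: that each $C_*T_\varepsilon\Ucpx_j$ sits inside $C_*\Xcpx$ as a filtered subcomplex, and that the collection of these subcomplexes exhausts $C_*\Xcpx$. The covering statement will follow directly from Lemma~\ref{lemma:closedstartubular}, so the only real content is a quick verification of the subcomplex property using the locality hypothesis on $C_*$ already in force.

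For the subcomplex property, I would argue as follows. The standing hypothesis on $C_*$ from Lemma~\ref{lemma:closedstartubular} is that the presence of a simplex at a given filtration value is decided using only points of distance at most $\varepsilon/2$ from any vertex of that simplex. Since $T_\varepsilon\Ucpx_j \subseteq \Xcpx$, any such local witness for a simplex in $C_*T_\varepsilon\Ucpx_j$ is also a point of $\Xcpx$, so the same simplex appears in $C_*\Xcpx$ at the same filtration value. Hence $C_*T_\varepsilon\Ucpx_j$ is a filtered subcomplex of $C_*\Xcpx$.

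For the covering property, I would pick an arbitrary simplex $\sigma\in C_*\Xcpx$ and choose any vertex $v$ of $\sigma$. Since $\{\Ucpx_j\}$ is a partition of $\Xcpx$, there exists a (unique) $j_0$ with $v\in\Ucpx_{j_0}$. The simplex $\sigma$ lies in the closed star of $\{v\}$ in $C_*\Xcpx$, so Lemma~\ref{lemma:closedstartubular} places $\sigma$ inside $C_*T_\varepsilon\{v\}$. The inclusion $T_\varepsilon\{v\}\subseteq T_\varepsilon\Ucpx_{j_0}$, together with the monotonicity of $C_*$ established in the previous paragraph, then gives $\sigma\in C_*T_\varepsilon\Ucpx_{j_0}$, as required.

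The main (and essentially only) obstacle is making sure the locality/monotonicity statement for $C_*$ is genuinely available: for Vietoris--Rips and the usual witness constructions the simplex-presence criterion is local in exactly the way needed, but if one wishes to apply the lemma to a more exotic $C_*$ this is the hypothesis to check. It is also worth noting that the partition hypothesis is used only to ensure that every vertex lies in \emph{some} $\Ucpx_j$, so the same argument goes through for any cover of $\Xcpx$ by (not necessarily disjoint) subsets.
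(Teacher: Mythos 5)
Your proof is correct and follows essentially the same route as the paper's: every simplex has a vertex in some $\Ucpx_j$ by the partition hypothesis, hence lies in the relevant closed star, and Lemma~\ref{lemma:closedstartubular} places it inside $C_*T_\varepsilon\Ucpx_j$. The only differences are cosmetic --- you apply the closed-star lemma to the singleton $\{v\}$ and then invoke monotonicity of $C_*$, and you make explicit the subcomplex/locality point (and the fact that a mere cover would suffice) that the paper leaves implicit.
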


We now describe some concrete methods for computing the covers.

\paragraph{\bf Cube partitions}
\label{sec:cube-partitions}

Suppose $\Xcpx$ is a subset of some Euclidean space $\mathbb
R^n$. Cover $\Xcpx$ by a disjoint collection of axis-parallel
cubes.  Membership in a particular cell is quickly computed by
comparing the coordinates of the points in $\Xcpx$ to boundaries
for the cubes.  A covering of $C_*\Xcpx$ is given by the closed
stars of the cells, easily computable as
$C_*T_\varepsilon\Ucpx_j$ for points in the tubular neighborhoods
of the cells.

\paragraph{\bf Voronoi partitions}
\label{sec:voronoi-partitions}

Pick some collection of landmark points $L\subset\Xcpx$. Compute
Voronoi regions $V_\ell$ for $\ell\in L$.  As above, we construct
a covering of $C_*\Xcpx$ by closed stars of the Voronoi cells,
computable as $C_*T_\varepsilon\Ucpx_j$.

\paragraph{\bf Geodesic Voronoi partitions}
\label{sec:geod-voron-part}

Pick some collection of landmark points $L\subset\Xcpx$. Let $C_*$ be
a method that constructs a flag complex on some graph induced by the
geometry of $\Xcpx$. Write $d_\Gamma$ for the geodesic graph distance,
given by $d_\Gamma(x,y)$ equal to the least number of edges from $x$
to $y$ in $\Gamma$. Write $\Ucpx_\ell$ for the set $\{y\in\Xcpx:
d_\Gamma(x,\ell)<d_\Gamma(x,\ell') \forall \ell'\in L\setminus\ell\}$,
or in other words the set of points closer to $\ell$ than to any other
point in $L$.

As a covering by subcomplexes, we grow each $\Ucpx_\ell$ to
$\Ucpx'_\ell$ by including all immediate neighbours of the points in
$\Ucpx_\ell$. A covering of $\Xcpx$ by subcomplexes is given by the
flag complexes on $\Ucpx'_\ell$.

\begin{figure}
  \subfloat{
    \includegraphics[width=3.3cm,page=1]{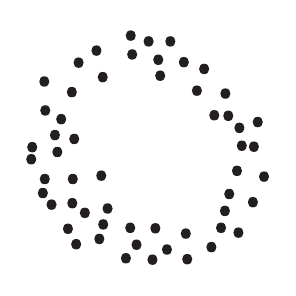}}\qquad
  \subfloat{
    \includegraphics[width=3.3cm,page=2]{voronoi}}\qquad
  \subfloat{
    \includegraphics[width=3.3cm,page=3]{voronoi}}\qquad
  \subfloat{
    \includegraphics[width=3.3cm,page=4]{voronoi}}\qquad
  \caption{\small Regions for Vietoris-Rips in an ambient Voronoi
    decomposition.  From a pointcloud (far left) we pick three
    landmark points, and get a Voronoi decomposition of the
    ambient space (middle left). By growing each Voronoi region
    by $\varepsilon$, we get a covering of the point cloud
    satisfying the conclusion in Lemma
    \ref{lemma:subcomplexcovering} (middle right).  The
    $\varepsilon$-Vietoris-Rips complexes in each of the covering
    patches and their intersections (far right) will provide a
    covering of the $\varepsilon$-Vietoris-Rips complex of the
    entire point cloud by subcomplexes.  }
  \label{fig:vietoris_blowup}
\end{figure}

\subsection{Parallelism and stratified memory}
\label{sec:parall-strat-memory}

Parallelisation with a covering is performed by splitting
responsibility for the chain complex of a single covering patch
or patch intersection to a single computational task. These tasks
communicate with each other to update the global state in the
horizontal differential parts of the computation, and perform
boundary map preimage computations internally. If a computational
node is responsible for chains over a single simplex of the nerve
complex, it will only communicate with nodes representing
simplices in its closure and its star.  This can be seen since in
the first iteration, that node will only need to communicate with
the nodes responsible for its faces and cofaces. In the second
iteration, it must communicate with the faces of its faces and
cofaces of its cofaces. 

To compute the total number of communications produced, we only
consider communication at the level of the nerve of the covering
rather than the size of the underlying simplicial complex. To
avoid double counting, we only consider communication from a
simplex into its closure.

To get a worst-case bound, we assume that the number of
iterations required is equal to the maximal dimension of the
nerve\footnote{As mentioned in
  Section~\ref{sec:spectr-sequ-diff}, the number of iterations is
  bounded by the minimum of the maximal dimensions of the nerve
  and the simplical complex} which we denote with $r$.  Over all
iterations, a $d$-simplex may produce no more than $O(2^d)$
communications. This follows from the assumption that $r>d$ and
the fact that the size of a $d$-simplex's closure is $2^d$.  If
we denote the number of $i$-dimensional simplices in the nerve
with $K_i$ and assume $r$ is the maximal dimension of the nerve,
the total number of communications is $\sum_{i=1}^{r}
K_i 2^i $. This is very rough and pessimistic bound. We intend to
do a more detailed analysis in future work.


We note that this approach will also be applicable in a
single-processor context. Instead of distributing computational tasks
on separate nodes, we can consider each subcomplex or subcomplex
intersection a memory block. This way, the spectral sequence approach
structures a computation, grouping accesses to related memory blocks
into cohesive computational tasks, which induces swap schemes and
schemes for stratified memory access.

\subsection{Nerve lemma}
\label{sec:nerve-lemma}

The Nerve lemma \cite[Corollary 4G.3]{hatcher2005algebraic} is a
powerful result from algebraic topology that has seen repeated use in
computational topology -- all the way from motivating why the commonly
used filtered simplicial complex constructions capture the original
topology \cite{c09TandD} to generating particular new results.

Quite often, when computational topologists use the Nerve lemma,
quite a lot of effort is expended to provide the contractibility
conditions required in the classical statement of the Nerve
lemma. This is due to the most common form of the Nerve lemma
guarantees both homological and homotopical equivalence between a
covering and its nerve. As we are usually only interested in
homology, we observe that the Mayer-Vietoris spectral sequence
yields a proof of a homological Nerve lemma:

\begin{theorem}[Homology Nerve lemma]\label{thm:nervehomology}
  Suppose a space $\Xcpx$ is covered by a collection $\Ucoll$ of
  subspaces $\Xcpx = \bigcup_i\Ucpx_i$, such that each
  $\bigcap_j\Ucpx_j$ has trivial reduced homology in all degrees. Then
  $H_*(\Xcpx) = H_*(\mathcal N(\Ucoll))$.
\end{theorem}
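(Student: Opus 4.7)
The plan is to apply the Mayer-Vietoris spectral sequence of Section~\ref{sec:mayer-viet-spectr} and show that the reduced-acyclicity hypothesis makes it degenerate after the second page onto the chain complex of the nerve.

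I begin with the double complex $E^0_{*,*}$ associated to the cover $\Ucoll$, where the column indexed by $p$ collects the $q$-chains of all $(p+1)$-fold intersections $\bigcap_{j\in J}\Ucpx_j$ with $|J|=p+1$ and $\bigcap_{j\in J}\Ucpx_j\neq\emptyset$; the vertical map $d^0$ is the internal simplicial boundary and $d^1$ is the alternating sum of inclusions of deeper into shallower intersections, weighted by the nerve boundary signs. By Theorem~\ref{thm:isomorphism}, $H_*(\Tot E^0_{*,*})\cong H_*(\Xcpx)$, so it suffices to identify the abutment of the spectral sequence with $H_*(\mathcal{N}(\Ucoll))$. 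First I compute $E^1$ by taking vertical homology. The hypothesis gives $\tilde{H}_*(\bigcap_{j\in J}\Ucpx_j)=0$ for every nonempty $J$, so $H_0=\kk$ and $H_q=0$ for $q\geq 1$ on each nonempty intersection. Summing over $|J|=p+1$ yields
\[
E^1_{p,0}\;\cong\;\bigoplus_{|J|=p+1,\;\bigcap_{j\in J}\Ucpx_j\neq\emptyset}\kk\;=\;C_p(\mathcal{N}(\Ucoll)),
\qquad E^1_{p,q}=0 \text{ for } q\geq 1,
\]
so the entire $E^1$ page lives in a single row.

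Next I identify the induced $d^1$ on that row. By construction it is the alternating sum, on $H_0$, of the inclusions $\bigcap_{j\in J}\Ucpx_j\hookrightarrow\bigcap_{j\in J\setminus\{j'\}}\Ucpx_j$; on connected components each such inclusion sends the unique class to the unique class (both source and target are nonempty and connected, each having $H_0=\kk$), so the induced map agrees with the simplicial nerve boundary. Thus $E^2_{p,0}=H_p(\mathcal{N}(\Ucoll))$ and $E^2_{p,q}=0$ for $q\geq 1$. For every $r\geq 2$ the differential $d^r$ leaving $E^r_{p,0}$ targets a module at a strictly positive second coordinate, which is already zero, and $d^r$ arriving at $E^r_{p,0}$ emanates from a zero module for the symmetric reason. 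Hence every $d^r$ with $r\geq 2$ vanishes and, as noted in Section~\ref{sec:mayer-viet-spectr}, the sequence has collapsed with $E^\infty=E^2$. Over a field, the column filtration on $H_n(\Tot E^0_{*,*})$ splits, so $H_n(\Xcpx)\cong E^\infty_{n,0}=H_n(\mathcal{N}(\Ucoll))$.

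The only non-formal step is the sign-compatibility check identifying $d^1$ on the bottom row with the simplicial nerve boundary; everything else follows mechanically from the hypothesis killing all rows with $q\geq 1$ in $E^1$. I expect this sign check, together with verifying that the induced map $\kk\to\kk$ on components is exactly $\pm 1$ and not zero, to be the one part of the argument that deserves genuine written attention in a careful write-up.
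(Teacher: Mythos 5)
Your argument is correct and follows the same route as the paper's own proof: acyclicity of all intersections concentrates the $E^1$-page of the Mayer--Vietoris spectral sequence in the bottom row as the nerve chain complex, $d^1$ is identified with the nerve boundary, and the sequence collapses at $E^2$ onto $H_*(\mathcal N(\Ucoll))$. You simply spell out (indexing conventions, the $\kk\to\kk$ component check, the collapse and splitting) what the paper compresses into "the result follows immediately."
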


\begin{proof}
  On the $E^1$-page, the  non-reduced homology is concentrated in
  degree 0. Since by assumption, the reduced homology is trivial
  for each intersection, the non-reduced homology is just a copy
  of the ground field. The differential on the $E^1$-page is the
  differential of the nerve complex itself, and the result
  follows immediately.
\end{proof}

While this is a well-known fact in classical algebraic topology,
we point this out to emphasize that the intersections need
not be equipped with explicit contracting homotopies for the
Nerve Lemma to work in homological computations.

\paragraph{\bf Persistent Nerve Lemma:} The Nerve Lemma was extended
to the persistence setting in~\cite{co-tpbr-08} using similiar homotopical
arguments. Here, we give the spectral sequence version of it
 which relies on a persistent acyclicity condition.  

From our framework also follows a persistent Nerve lemma.
\begin{definition}
  A space is \emph{persistently acyclic} if it has persistent homology
  consisting of one free generator of $H_0$, and no homology in higher degrees.
\end{definition}

\begin{theorem}[Persistent homology Nerve lemma]
  Suppose a filtered space $\Xcpx$ is covered by a collection $\Ucoll$
  of filtered subspaces $\Xcpx = \bigcup_i\Ucpx_i$, such that each
  $\bigcap_j\Ucpx_j$ is persistently acyclic. Then $H_*(\Xcpx) =
  H_*(\mathcal N(\Ucoll))$.
\end{theorem}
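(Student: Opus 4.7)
The plan is to mimic the proof of Theorem \ref{thm:nervehomology} but to run the Mayer--Vietoris spectral sequence in the category of graded $\kk[t]$-modules, using the persistence complex framework recalled in Section \ref{sec:background}. The whole proof then reduces to two observations: persistent acyclicity of intersections collapses the $E^1$-page to a single row, and the surviving row equals the persistent nerve complex.

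First I would form the double complex $E^0_{p,q}$ from the filtered cover, where $E^0_{p,q}$ is the free graded $\kk[t]$-module of $q$-chains of the $p$-fold intersections. Taking vertical homology, the persistent acyclicity assumption gives $E^1_{p,q}=0$ for $q\geq 1$ and $E^1_{p,0}\cong\kk[t]$ for each nonempty $p$-fold intersection, generated by its persistent connected component. Next I would identify the horizontal differential: since $d^1$ is assembled from signed inclusions of deeper intersections into more shallow ones, and each such inclusion carries the generator of persistent $H_0$ to the generator on the target, $d^1$ coincides with the boundary operator of $\mathcal N(\Ucoll)$, tensored up to $\kk[t]$-coefficients.

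With the $E^1$-page concentrated in the row $q=0$, every higher differential $d^r$ for $r\geq 2$ has target outside this row, so by the vanishing criterion noted at the end of Section \ref{sec:mayer-viet-spectr} the spectral sequence collapses at $E^2$. This yields $E^\infty_{p,0}=H_p(\mathcal N(\Ucoll))$ as a persistent $\kk[t]$-module, and an appeal to Theorem \ref{thm:isomorphism} identifies the total homology of the double complex with $H_*(\Xcpx)$, giving the claim.

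The step I expect to be the main obstacle is justifying that Theorem \ref{thm:isomorphism} applies in the persistent setting, i.e.\ as an isomorphism of graded $\kk[t]$-modules rather than merely of the underlying vector spaces at each filtration stage. The argument ought to go through because every piece of the construction---the chain complexes of the intersections, the nerve differentials and the blowup---is functorial in the underlying filtered complex, so the isomorphism provided by the classical statement is automatically natural with respect to the filtration inclusions. Once this naturality is recorded, the two bookkeeping steps above close the proof.
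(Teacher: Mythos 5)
Your proposal is correct and follows essentially the same route as the paper: the paper's proof likewise observes that persistent homology is homology with $\kk[t]$-coefficients, that persistent acyclicity concentrates the $E^1$-page in the bottom row with a single (appropriately graded) free generator per intersection, and that $d^1$ is then the nerve boundary operator, so the sequence collapses and Theorem \ref{thm:isomorphism} gives the identification with $H_*(\Xcpx)$. Your extra remark about naturality of Theorem \ref{thm:isomorphism} over $\kk[t]$ is handled in the paper simply by the blanket observation that the entire spectral sequence machinery of Section \ref{sec:mayer-viet-spectr} is set up for persistence modules, i.e.\ graded $\kk[t]$-modules.
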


\begin{proof}
  We note that persistent homology is homology with coefficients
  in $\kk[t]$. The persistent acyclicity condition ensures that
  homology has just one generator with the appropriate
  grading concentrated in the 0-homology. As in Theorem
  \ref{thm:nervehomology}, the sequence collapses after the $E^1$
  page and the $d^1$ differential is equivalent to the boundary
  operator of the nerve. The result follows.
\end{proof}


\subsection{Algebraic kernel, cokernel and image persistence}
\label{sec:algebr-kern-cokern}

In~\cite{cohen2009persistent}, the authors construct topological
spaces that compute \emph{image}, \emph{kernel} and \emph{cokernel}
persistence modules for the map induced in persistent homology by a
map between filtered topological spaces. We observe that the
constructions in section \ref{sec:algebraic-operations} provide
corresponding operations on a purely algebraic level, computing
persistence modules for \emph{image}, \emph{kernel} and
\emph{cokernel} given a map of persistence modules, such as the one
induced from a topological map.

The notion of compatible filtration included as a condition
in~\cite{cohen2009persistent} is reflected in the requirement that
maps between persistence modules be \emph{graded} module homomorphisms
of degree 0, implying that the filtration inclusions on both sides
commute with the map itself.

\section{Discussion}
\label{sec:discussion}

We have constructed algorithms to compute with the spectral sequence
generated by a double complex. In particular, we describe how to
compute a spectral sequence generalizing the Mayer-Vietoris long exact
sequence to compute both classical and persistent homology using a
cover of the space and restricting computation to each of the covered
regions and their intersections.

This layout enables us to formulate parallelization schemes, with
parallelism and independency guarantees present both along the Nerve
complex axis of our computational scheme and along separate components
of the covering.

As an added bonus, the formalism of the spectral sequence
approach suggests simplified proofs for several classical
interesting results in the persistence community, and opens up a
wide spectrum of interesting areas of research. In particular, we
are interested in pursuing:
\begin{description}
\item[Complexity bounds for parallelism] This paper presents the
  formalism, but does not attempt a complete description of the
  complexity behaviour of the proposed algorithms.
\item[HPC implementation] While the work was driven by a wish to take
  persistent homology to the world of computing clusters, our
  implementation so far is a serial one. A parallel implementation,
  and actual performance measurements would be very interesting.
\item[Cover generation schemes] While we give a few suggestions here,
  the construction of an adequate cover that uses or highlights
  inherent features of the dataset under analysis is going to be an
  important factor of these methods. A more complete catalogue of
  cover generation schemes is going to be important for the practical
  application of our techniques.
\item[Theoretical implications] We have already seen that several
  results have algebraic proofs that vastly simplify the arguments
  involved. We would be very interested in exploring the space of new
  or classical results that become accessible with a spectral sequence
  language.
\end{description}

\clearpage
\bibliographystyle{abbrv}
\bibliography{mv}

\clearpage
\appendix

\section{Proofs of selected results}
\label{sec:proofs-select-result}

\begin{proof}[Proof of Lemma \ref{lemma:closedstartubular}]
  Suppose $C_*$ determines the presence of a simplex using points at most
  $\varepsilon/2$ away from any point in a simplex, and each simplex has
  diameter at most $\varepsilon/2$. A simplex $\sigma$ in $C_*\Xcpx$ is in the
  closed star of $S$ if at least one vertex of $\sigma$ is in
  $S$. By the condition on the diameter, all other points of $\sigma$
  are contained in $T_{\varepsilon/2} S$. In addition, any point that
  can influence the decision on whether $\sigma$ exists or not will be
  within $\varepsilon/2$ of the furthest point from the vertex of
  $\sigma$ in $S$, and thus the simplex $\sigma$ is entirely
  determined by points at most $\varepsilon$ away from $S$, and thus
  by $T_{\varepsilon} S$.
\end{proof}

\begin{proof}[Proof of Lemma \ref{lemma:subcomplexcovering}]
  If $\Xcpx$ is partitioned into subsets $\Ucpx_j$, we may certainly
  partition $C_*\Xcpx$ by the closed stars of the subsets. Indeed,
  since the $\Ucpx_j$ partition $\Xcpx$, we are guaranteed that any
  vertex of a simplex lies in at least one of the $\Ucpx_j$. Hence, in
  particular, the closed stars of all the $\Ucpx_j$ will contain all
  simplices in $C_*\Xcpx$.

  By Lemma \ref{lemma:closedstartubular}, each such closed star is
  contained in $C_*T_\varepsilon\Ucpx_j$. Hence, the subcomplexes
  $C_*T_\varepsilon\Ucpx_j$ cover $\Xcpx$.
\end{proof}

\section{Implementation}\label{sec:implementation}

\begin{figure} 
\centering\includegraphics[width=0.75\textwidth]{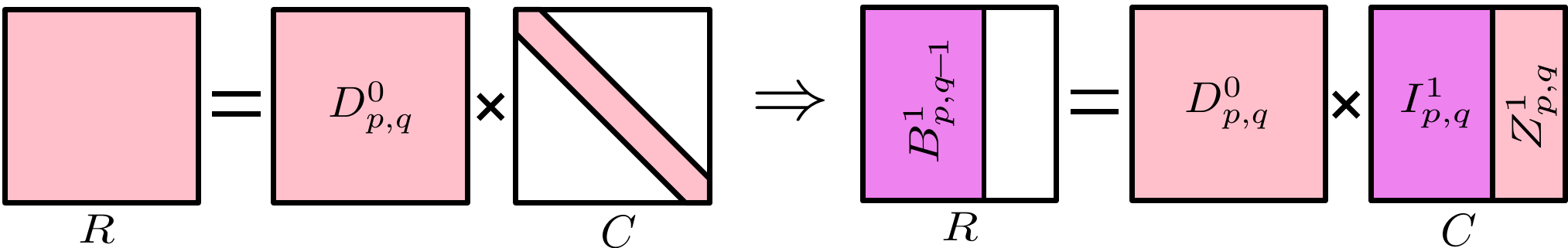}
\caption{\small \label{fig:basic_reduce}\small The basic reduction: As
  in~\cite{cem-vvuplt-06}, we reduce the matrix $R$ left to right, always
  choosing the lowest possible pivot with shaded areas
  representing non-zero entries. In contrast with~\cite{cem-vvuplt-06}, we
  represent each dimension separately, so the matrices are not upper
  triangular. As we reduce $R$, we perform the same operations on
  $C$. At the end, we perform a permutation to put the columns
  with pivots on the left (this is not done in practice). These
  represent the boundary basis and the corresponding columns in
  $C$ are the preboundary basis. The columns in $C$ corresponding
  to the empty columns in $R$ are the cycle basis.}
\end{figure}
In this section, we recount the algorithm in
Section~\ref{sec:spectr-sequ-diff} in greater detail. In
particular, the sections follow each other closely. 

 The basic element of computation is the chain which is
 represented by a column vector with each row corresponding to a
 simplex and the entries corresponding to a coefficient
 $\kk$. Since the chains are homogeneous, it is sufficient to
 annotate each chain with its degree. Bases are a collection of
 linearly independent chains, and are represented as a matrix,
 where the vectors are sorted from left to right in order of
 increasing degree\footnote{This corresponds to an increasing
   filtration index.}.

A persistent homology class is represented as a cycle chain and a
boundary chain which represent the birth and death times of the
class respectively. In terms of the presentation of the module,
an all classes have a representative in the cycle basis where the
degree corresponds to birth time. Inessential classes have a copy
of this with higher degree in the boundary basis. In the quotient
space, this corresponds precisely to the algebraic form described
in~\cite{cohen2009persistent}. In practice, we do not choose
quite this representation.  Rather than keep a separate boundary
basis and cycle basis, inessential classes are annotated with two
degrees to represent the boundary and cycle basis degrees.

\begin{figure} 
\centering\includegraphics[width=0.9\textwidth]{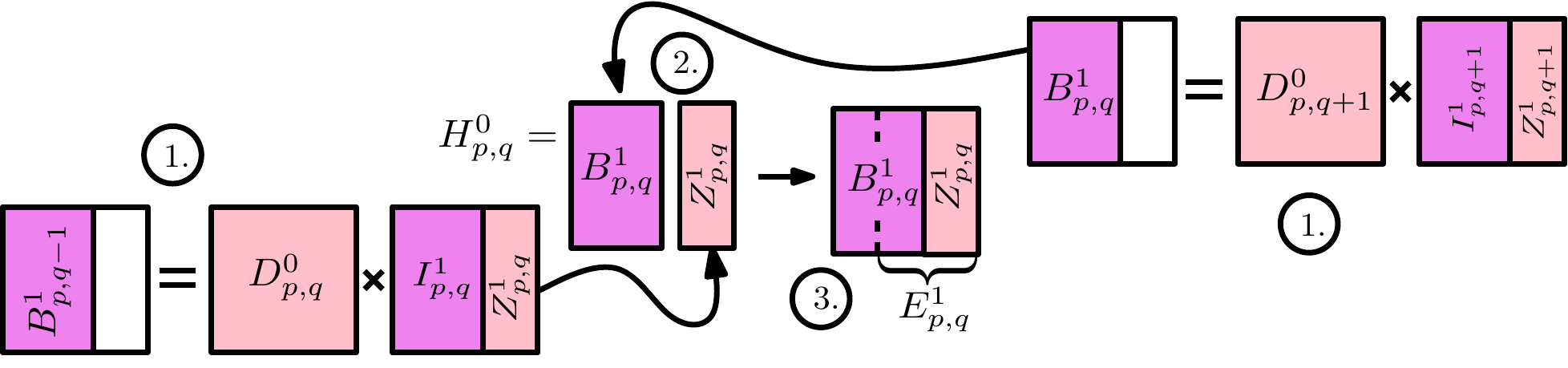}
\caption{\small \label{fig:boundary_reduce} The reduction procedure:
  1. The standard reduction into cycles and boundaries
  2. Reduction of cycle basis with respect to cycle basis, the
  reduction proceeds left to right. Note that certain chains
  (non-essential classes) in $Z^1_{pq}$ will be completely
reduced. Since these are already in the span of the chosen basis
we  drop those columns. 3. The input to the next iteration
ignores null persistent classes (in the picture we assume they
happen in a left block, but this is only for visualization).}
\end{figure}

\paragraph{\bf Initialization}: The initialization corresponds to 
running the standard persistence algorithm on each column in the
double complex with $d^0$.  As in~\cite{cem-vvuplt-06}, by
reducing the boundary matrix and keeping track of column
operations we can extract all the required bases at each node
$(p,q)$(Figure~\ref{fig:basic_reduce}). We now perform one
additional operation: at each node, we reduce the cycle basis
with respect to the boundary basis as shown in
Figure~\ref{fig:boundary_reduce}. We do this additional step to
make operations at later stages simpler.

To keep track of the preboundary we track the operations
performed to produce the non-zero reduced columns of the boundary
matrix. This way, we get a minimal spanning set of $I^1_{p,q}$
such that $d^0$ yields an isomorphism $I^1_{p,q}\to B^1_{p,q-1}$.
At the end of this step, we have matrices representing
$Z^1_{p,q}$, $B^1_{p,q}$ and $I^1_{p,q}$ for all $(p,q)$.
$Z^1_{p,q}$ and part of $B^1_{p,q}$ together form a basis for
$E^1_{p,q}$. We omit the boundary elements because $E^r$ is a
subquotient of all earlier $E^j$, so any non-trivial class in
$E^r$ is represented earlier, and any relation from earlier
remains a relation. Hence, any null-persistent class may show up
as a relation, but only a relation killing its corresponding
cycle, which also will no longer be relevant. 
Therefore they are still valid relations for the basis but cannot
map to any element in $E^r_{p-r+1,q+r}$.

\paragraph{\bf Iteration}: The general iteration computes the basis 
for $Z^r$, $B^r$, $I^r$, from which we compute $E^{r+1}$. Note
that these do not need to be stored over multiple iterations. We
store the 0-th iteration from the initialization and then only
the previous iteration: For the $r$-th iteration, we only have
$(r-1)$-th iteration. Once completed we can forget the $(r-1)$-th
iteration. As before, we first assume that all the
$r$-differentials are given as linear maps. Practically, we
compute the differential from the information in each subsequent
iteration. This can be thought of as a subroutine we call at the
beginning of each iteration and is described below.

The rest of the iteration follows much as the initialization, we
apply $d^r$ to $E_{p,q}^{r}$. Each element gives a chain $x \in
E_{p-r+1,q+r}^{r}$. We first reduce $x$ with respect to the
existing boundary chains. In Figure~\ref{fig:one_step}, we show how
we perform a reduction step. We reduce the chain $x$ with
respect to $y$. The underlying operation is the same as for
regular reduction in Gaussian elimination: we multiply the pivot
column with the appropriate field coefficient and add it to the
chain we are reducing. The only additional book-keeping we must
do is to take care that the degrees of the cycle and boundary
chains are appropriately updated. 

Assume that both $x$ and $y$ are inessential classes, so each has a
generator and a boundary relation. To illustrate this,
Figure~\ref{fig:one_step} shows the mapping between two persistence
interval, representing the degrees of the generators and relations.
Such a mapping would generically map some interval $[c,d]$ to an
interval $[a,b]$, with $a \leq c\leq b\leq d$.  This corresponds to
the requirement that the maps be graded so generators map to multiples
of generators and relations map to relations. The kernel of the map is
represented by $x$ shifted to the degree of the relation of $y$. The
cokernel is represented by $y$ with the relation at the degree of the
generator of $x$. The image is represented by $y$, with a degree shift
up to the generator of $x$.  The transformation of the intervals is
shown in Figure~\ref{fig:one_step}.

This corresponds to only one reduction step, we repeat this
procedure completely reducing $x$ in terms of the basis in
$E_{p-r+1,q+r}^{r}$ and we do this for all elements in
$E_{p,q}^{r}$. If in the reduction of an element, an interval is
reduced to zero (i.e. $[c,c]$), further reduction is unnecessary,
since it cannot be in the kernel, nor can can it map to anything
further in the boundary.  The kernel of the map is given by all
the elements which have non-null persistence and form a basis for
$Z_{p,q}^{r+1}$. Likewise, in the target space (assuming
$Z_{p-r+1,q+r}^{r+1}$ had already been computed), the remaining
non-null persistent classes form a basis
$E_{p-r+1,q+r}^{r+1}$. We note that we can compute the
basis for the boundary and kernel independently. However, then we
must reduce the kernel basis with repect to the boundary basis
(just as in the initialization).


\begin{figure} 
\centering\includegraphics[width=0.7\textwidth]{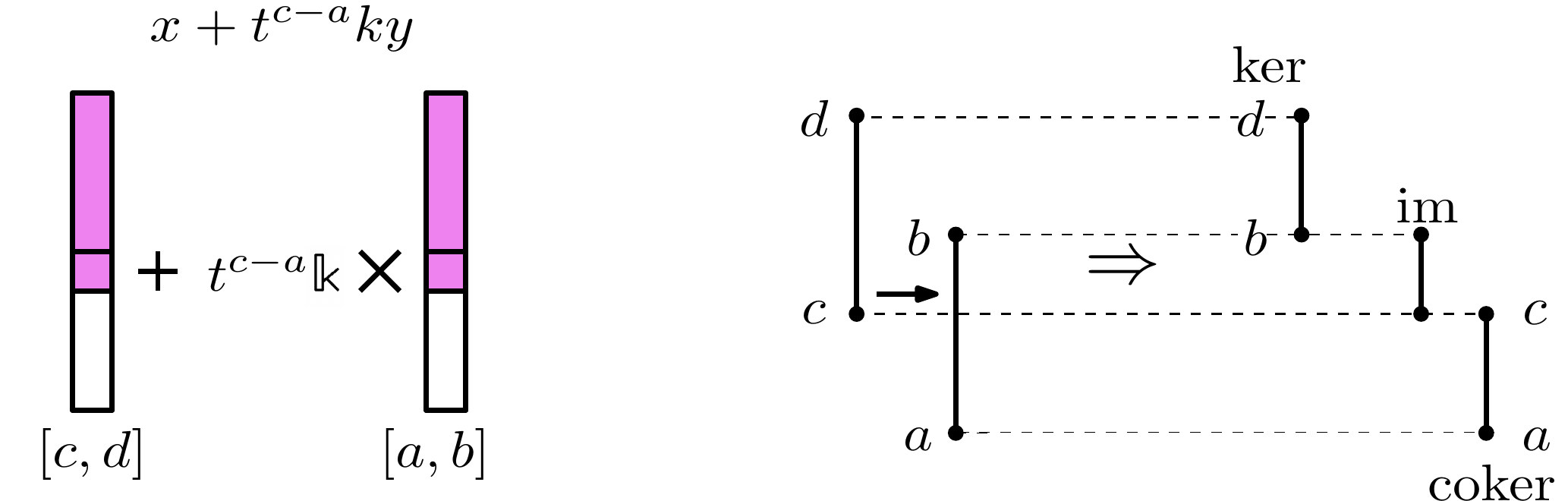}
\caption{\small \label{fig:one_step} To perform a reduction of
  the $x \in d^r E^r_{p+r-1,q-r}$ with an element of the basis of
  $y \in E^r_{p,q}$, we multiply by the appropriate coefficient
  $t^{c-a}k\in\kk[t]$ and add the two. The condition $a\leq c\leq b
  \leq d$ follows from the grading of the maps. We update the
  degrees of $d^r E^r_{p+r-1,q-r}$ to $[b,d]$ and $E^r_{p,q}$ to
  $[a,c]$.  Note that if we get $[d,d]$ or $[a,a]$, these
  elements have become null persistent.  }
\end{figure}

\paragraph{\bf Computing the $r$-differential}: For
$r=1$, the differential is given by $d^1$, where the map on homology
is induced directly from the chain map.  

For $r>1$, we use the following procedure to compute $d^r$ shown
in Figure~\ref{fig:lift}. In the previous iteration, we store the
image of $d^{r-1} E^{r-1}_{p,q}$. More precisely, we store the
representation of the image in the basis of
$B^0_{p-r+2,q+r-1}$. In this basis, performing the lift to
$I^0_{p-r+2,q+r}$ can be done through the stored relation.
%
%
%
The lift exists because from
Section~\ref{sec:mayer-viet-spectr}, we know that $d^{r-1}
E^{r}_{p,q} \in B^0_{p-r+2,q+r-1}$.

The last step is applying the $d^1$ map, which gives the chain we then
reduce with respect to the $E^{r}_{p-r+1,q+r}$ basis. By
construction this satisfies equations~\ref{eq:cycle_equations}
and ~\ref{eq:boundary_equations}. This is precisely one step of
the staircase shown in Figure~\ref{fig:cyclesandboundaries}.

\begin{figure} 
\centering\includegraphics[width=0.7\textwidth]{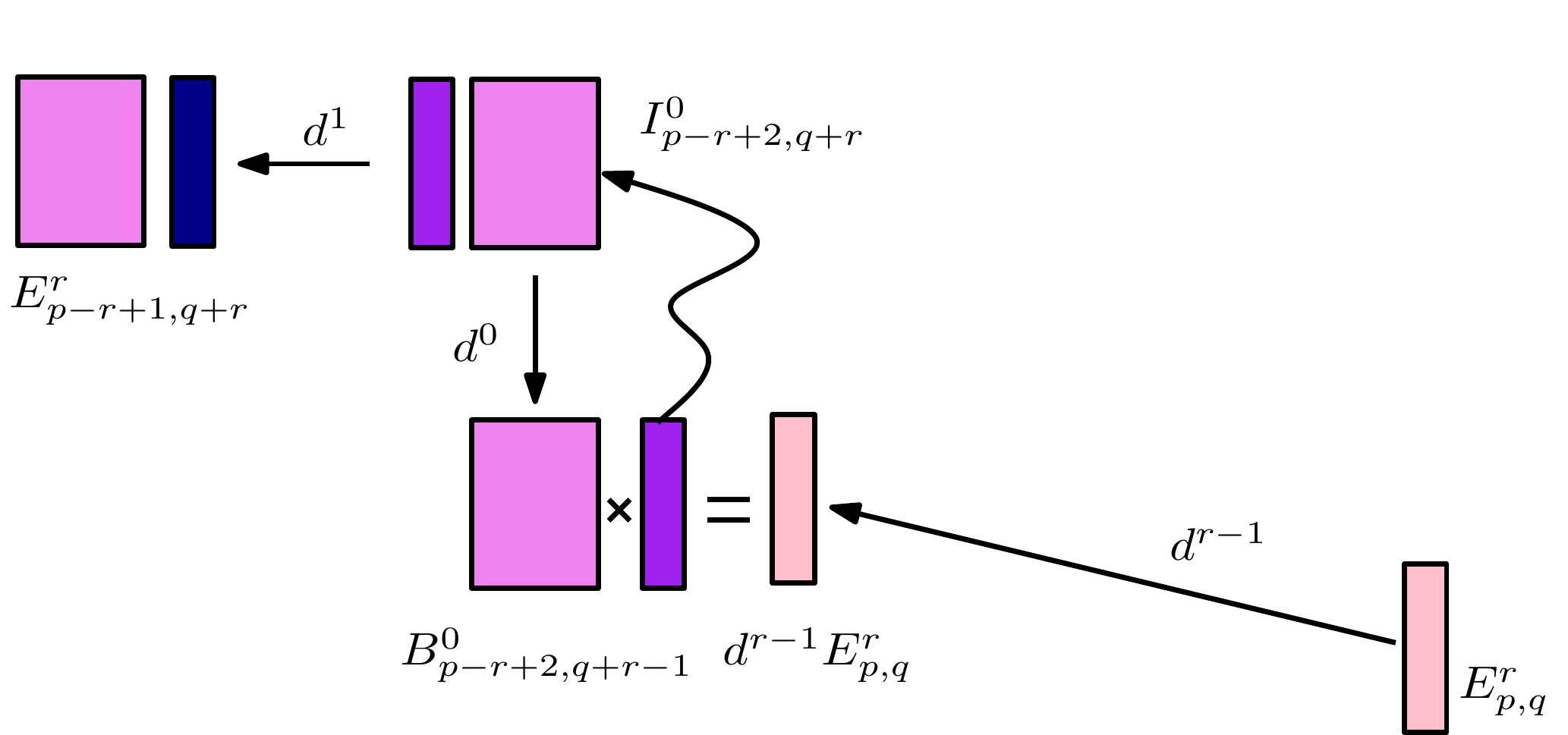}
\caption{\small \label{fig:lift} To compute $d^r$, we first find the
  representation of the $d^{r-1} E^r_{p,q}$ in terms of
  $B^0_{p-r+1,q+r-1}$. Since we know this is in the image of
  $d^0$, we can lift this vector to $I^0_{p-r+1,q+r}$ and then
  compute the chain in terms of $I^0_{p-r+1,q+r}$. Applying
  $d^1$, we get a chain we can reduce in terms of $E^r_{p-r+1,q+r}$, thereby computing $d^r$}
\end{figure}

\end{document}